\documentclass[11pt, reqno]{amsart}

\usepackage{amsmath}
\usepackage{amssymb}
\usepackage{amsthm}
\usepackage{amscd, amsfonts}
\usepackage[dvips]{epsfig}
\usepackage{verbatim}
\usepackage{epsf}
\usepackage[bookmarksnumbered,pdfpagelabels=true,plainpages=false,colorlinks=true,
            linkcolor=black,citecolor=black,urlcolor=blue]{hyperref}
\usepackage{cases}

\theoremstyle{plain}
\newtheorem{theorem}{Theorem}[section]
\newtheorem{lemma}[theorem]{Lemma}

\newtheorem{coro}[theorem]{Corollary}

\theoremstyle{definition}
\newtheorem{rema}[theorem]{Remark}
\newtheorem{defi}[theorem]{Definition}
\newtheorem{notation}[theorem]{Notation}
\newtheorem{assump}[theorem]{Assumption}

\numberwithin{equation}{section}

\setlength{\textwidth}{6.9in} \setlength{\textheight}{8.6in}
\hoffset=-0.9truein \voffset=-0.1truein

\baselineskip=7.0mm
\setlength{\baselineskip}{1.09\baselineskip}

\newcommand{\rar}{\rightarrow}

\newcommand{\cF}{\mathcal F}

\newcommand{\cS}{\mathcal S}

\newcommand{\al}{\alpha}
\newcommand{\be}{\beta}

\newcommand{\de}{\delta}

\newcommand{\RR}{\mathbb R}

\newcommand{\ve}{\varepsilon}
\newcommand{\Vol}{\operatorname{Vol}}

\newcommand{\p}{\parallel}

\title[Positive energy vacua]{Remarks on positive energy vacua via effective  potentials in string theory}

\author[Dabholkar]{Sujan P. Dabholkar}
\address{C. N. Yang Institute for Theoretical Physics,
Stony Brook University, Stony Brook, NY, USA, 11794}
\email{sujan.dabholkar@stonybrook.edu}
 \thanks{We would like to thank Michael R. Douglas for fruitful discussions on this topic at various stages.}

\author[Disconzi]{Marcelo M. Disconzi}
\address{Department of Mathematics,
Vanderbilt University, Nashville, TN, USA, 37240}
\email{marcelo.disconzi@vanderbilt.edu}

\author[Pingali]{Vamsi P. Pingali}
\address{Department of Mathematics, 
John Hopkins, Baltimore, MD, USA, 21218}
\email{vpingal1@jhu.edu}

\begin{document}

\begin{abstract}
We study warped compactifications of string/M theory with the help of effective potentials, continuing previous work of the last two authors
and Michael R. Douglas presented in
\cite{Disconzi:2012a}. The dynamics of the conformal factor of the internal metric, which
 is responsible for instabilities in these constructions, is explored, and such instabilities 
are investigated in the context of de Sitter vacua. We prove existence results for the 
equations of motion in the case
of a slowly varying warp factor, and the stability of such solutions is also addressed.
These solutions are a
family of meta-stable de Sitter vacua from type IIB string theory in a general 
non-supersymmetric setup.
\end{abstract}

\maketitle

\tableofcontents

\section{Introduction}
It is well known that consistency of String Theory requires a $10$-dimensional space-time,
while maximal supergravity and its quantum version 
called ``M theory'' make sense in $11$-dimensional space-time.
In both cases, one  makes contact with standard $4$-dimensional physics by compactifying the 
extra dimensions to a small $n=6$ or $7$-dimensional compact manifold $M$ ---
obtaining in this fashion
a lower dimensional quantum theory of gravity with matter.

A primary goal of the work on compactifications is to derive an effective action in $4$ 
dimensions --- i.e., an action that could reproduce the observed $4$-dimensional physics.
This effective action is a functional of the $4$-dimensional metric and whatever additional data parametrize the extra
dimensions --- its metric, and the other fields of supergravity or superstring theory
 --- taken as functions on
$4$-dimensional space-time.  Critical points of this effective action, in the usual sense of a variational principle,
correspond to critical points of the original higher-dimensional supergravity or superstring action. 

Earlier works on compactifications have relied heavily 
on supersymmetry, and supersymmetric 
constraints have been used to understand several aspects of  
string compactifications. 
The wide physical understanding brought by the study of supersymmetric 
models notwithstanding, there are at least two good reasons for investigating
effective potentials that do not incorporate supersymmetry.
The first is that, if it exists, supersymmetry is an exact symmetry of nature
only at energy scales far beyond the validity of many of the effective descriptions.
The second reason is the strong  evidence that the cosmological constant, or 
vacuum energy, of our universe is positive. In the simplest effective 
descriptions of string theory, the vacuum energy of  $4$-dimensional space-time
is given by an effective potential $V_{eff}$. Persistent physical features, like the sign of
 the cosmological
constant, should typically be described by meta-stable local minima of $V_{eff}$. 
However, effective potentials with local minima corresponding to positive
vacuum energy do not in general allow supersymmetry.

Here we shall be concerned with what can be called cosmological constraints
for de-Sitter (dS) vacua. In other words, we consider the case
of a maximally symmetric $4$-dimensional space-time and seek conditions that guarantee
the existence of meta-stable positive local minima of $V_{eff}$.
Our focus will be on compactifications with Dq-branes and/or  Oq-planes
and Type IIB strings.

Many authors contributed to our current understanding
of effective descriptions in string theory, and 
a thorough review would be beyond the scope of this manuscript.
A detailed and seminal discussion of the matter can be found in
\cite{Douglas:2009a}, with subsequent properties investigated in
\cite{Disconzi:2012a}. The interested reader should also consult 
\cite{AGMP, CHSW, Douglas:2006ab, GKP, Grana:2006ab,
KKLT,  EvaS} and references therein for further details.

\section{Setting and the basic equations\label{basic_section}}

Consider compactification on an $n = D - d$-dimensional compact manifold $M$ to
$d$-dimensional maximally symmetric space-time (Minkowski, AdS, dS).
In the $D$-dimensional space, 
consider General Relativity coupled
to matter, the latter being encoded as usual in a set of field strengths $F^{(p)}$,
$p=1,\dots, L$ (these are curvature terms, with the standard curvature of the Yang-Mills functional being
the canonical example). The full $D$-dimensional metric is assumed 
to have the form of a Kaluza-Klein warped metric with a conformal factor,
\begin{equation}
ds^{2}= e^{2A(x)}\eta_{\mu \nu}dz^{\mu}dz^{\nu} + e^{2B(x)}g_{ij}(x)dx^{i}dx^{j},
\nonumber
\end{equation}
where $\eta_{\mu \nu}$ is a metric on the $4$-dimensional space-time (Minkowski, dS, AdS) with  $z^\mu$ coordinates on it, $g_{ij}$ is a 
metric on the internal manifold $M$, $x^i$ are coordinates on $M$, and $x\in M$.

\begin{notation}
We shall adopt the notations and conventions of
\cite{Disconzi:2012a}.
\end{notation}

\begin{assump} From now on we assume that $d=4$ and $n=6$.
For simplicity, all quantities involved are assumed to be smooth\footnote{Orbifolds could 
also be included, in which case quantities should be smooth away
from singularities. We have not treated orbifolds here
to avoid technicalities; they will be the focus of a future work
\cite{in_prep}.} unless stated 
differently. Since many of the fields involved are usually distributional quantities, 
our point of view is that they have been properly smeared by, for example,
convoluting against smooth functions.
\label{assumption_smooth}
\end{assump}

\begin{rema}
The above smoothness assumption can certainly be relaxed with 
no difficulties. In fact, our existence theorems will hold in Sobolev spaces, 
so it suffices to assume that our fields have only a finite number of derivatives. 
A possible exception is 
the ``string term" $T_{string}$ (see below). Such a term is, in general,
a distribution supported on hyperplanes. Hence, whenever necessary, it will
be assumed that fields have been properly smeared or smoothed out,
as mentioned above.
The smearing of
$T_{string}$ notwithstanding, we point out that many of our results will remain
valid, if however only suitable ``integral bounds" --- which allow for 
distributional coefficients --- are imposed on $T_{string}$
similarly to what was done in \cite{Disconzi:2012a}.
\label{remark_smooth}
\end{rema}

Following the construction of $V_{eff}$ as in \cite{Douglas:2009a} yields
\begin{align}
\begin{split}
V_{eff}&= \frac{1}{2}\int_{M} \big ( -u^{2}v^{2}R-5\nabla v \cdot \nabla(u^{2}v)
-3v^{2} |\nabla u |^{2}+\frac{u^2}{2}\sum_{p=0}^L v^{(3-p)}|F_{p}|^2  \\
&-u^{2}v^{(q-3)/2}T_{string} \big )+\alpha \Big (\frac{1}{G_N}-\int_{M} uv^3 \Big )+
\beta \Big (\Vol(M)-\int_{M} v^3 \Big),
\end{split}
\nonumber
\end{align}
where $u=e^{2A}$ is called the warp factor, $v=e^{2B}$ is called the conformal factor, 
$R$ is the scalar curvature of $g$, 
 $\al$ and $\be$
are constants\footnote{These are in fact Lagrange multipliers in the sense 
that their variational equations enforce constraints.}, $G_N$ is Newton's constant, $p$, $q$ and $L$ are integers that
depend on the particular model under consideration, and integrals are
with respect to the natural volume element given by $g$. The dot 
``$\cdot$" is the inner product on the metric $g$, but we shall omit it when no
confusion can arise and write simply $\nabla u  \nabla v$ etc.
The term $T_{string}$ is added ad hoc to incorporate the non-classical contributions to the effective 
potential coming from string/M-theory. 

It is shown in \cite{Douglas:2009a} that once the warped constraint 
is imposed (see below), the Lagrange multiplier $\al$ becomes the 4-d space-time scalar curvature. 
This provides a setup for justifying many effective potentials that have been studied in 
the
context of string/M theory compactifications, especially regarding dS solutions. 
In particular, a given vacuum corresponds to a dS solution, and thus has a
positive vacuum energy, if and only if $\al > 0$. 
The other Lagrange multiplier, $\be$, is used to obtain the minimum for 
the volume modulus related to the conformal factor. Also, it is important to notice that arguments related to Chern-Simons terms and warping done in \cite{Douglas:2009a} will also hold here, as those terms
do not depend on the conformal factor.

We  consider compactifications with Dq-branes or/and Oq-planes, and we
want to 
study critical points of $V_{eff}$. 
The first variation of $V_{eff}$ with respect to $u$ and $v$ in the direction 
of $\psi$ and $\varphi$  are, respectively, 
\begin{align}
\begin{split}
\frac{\delta V_{eff}}{\delta u } (\psi) = & \, \frac{1}{2}\int_{M} \big ( -2uv^{2}R
+10uv \Delta   v +6\nabla(v^{2}\nabla u)+u\sum_{p=0}^Lv^{(3-p)}|F_{p}|^2 \\
&-2uv^{(q-3)/2}T_{string} \big )\psi -\alpha\int_{M} v^3 \psi,
\end{split}
\nonumber
\end{align}
and
\begin{align}
\begin{split}
\frac{\delta V_{eff}}{\delta v } (\varphi)
 = & \, \frac{1}{2}\int_{M}\big ( -2u^{2}vR+5\Delta (u^{2}v)+5u^{2}\Delta  v 
 -6v|\nabla u|^{2}  \\
&+\frac{u^2}{2}\sum_{p=0}^L(3-p)v^{(2-p)}|F_{p}|^2-\frac{(q-3)}{2}u^2v^{(q-5)/2}T_{string} \big )
\varphi  \\
&  -3 \alpha\int_{M}  uv^{2} \varphi
- 3 \beta\int_{M} v^{2} \varphi .
\end{split} \nonumber
\end{align}
We must satisfy $\frac{\delta V_{eff}}{\delta u}= \frac{\delta V_{eff}}{\delta v}= 0$ at 
each critical point. From the above we obtain the following equations of motion:
\begin{subnumcases}{\label{eq_motion}}
10uv \Delta v+6\nabla(v^{2}\nabla u) -2uv^{2}R +u\sum_{p=0}^Lv^{(3-p)}|F_{p}|^2-2uv^{(q-3)/2}T_{string}=2\alpha v^3, & \label{eq_motion_1}  \\
5\Delta(u^{2}v)+5u^{2}\Delta v -2u^{2}vR -6v|\nabla u|^{2}+\frac{u^2}{2}
\sum_{p=0}^L(3-p)v^{(2-p)}|F_{p}|^2 &  \nonumber \\ 
\hspace{1cm} -\frac{(q-3)}{2}u^{2}v^{(q-5)/2}T_{string} 
 =6[\alpha uv^2+\beta v^2]. &  \label{eq_motion_2}
\end{subnumcases}
These are subject to the constraints
\begin{subnumcases}{\label{constraints}}
\int_{M} uv^3 = \frac{1}{G_N},   & \label{constraint_1}  \\
\int_{M} v^3 = \Vol(M). &  \label{constraint_2}
\end{subnumcases}
Equation (\ref{constraint_1}) is sometimes referred to as the warped constraint.

The second variation of $V_{eff}$ with respect to $v$ and in the direction 
$\varphi$ is
\begin{align}
\begin{split}
\frac{\de^2 V_{eff}}{\de v^2}(\varphi) = &   \, 
\frac{1}{2}\int_{M} \big ( -2u^{2}R
-6 |\nabla u |^{2}
+\frac{u^2}{2}\sum_{p=0}^L(3-p)(2-p)v^{(1-p)}|F_{p}|^2  \\ &
-\frac{(q-3)(q-5)}{4}u^2v^{(q-7)/2}T_{string} \big ) \varphi^2 
 - 5 \int_M \nabla \varphi \nabla(u^2 \varphi)   \\
& -6  \alpha \int_{M} uv \varphi^2
- 6 \beta\int_{M} v \varphi^2 .
\end{split}
\nonumber
\end{align}
\begin{defi}
The  \emph{mass squared of the conformal factor} or \emph{volume modulus}, 
denoted
$\frac{\partial^2 V_{eff}}{\partial v^2}$,
  is defined by
  \begin{gather}
\frac{\partial^2 V_{eff}}{\partial v^2} : =\left. \frac{\de^2 V_{eff}}{\de v^2}(\varphi) \right|_{\varphi = 1} .
\nonumber
\end{gather}
Solutions $(v,u)$ of (\ref{eq_motion}) such that 
\begin{gather}
\left. \frac{\partial^2 V_{eff}}{\partial v^2} \right|_{(v,u)} > 0
\nonumber
\end{gather}
are called stable\footnote{Here we use a slight abuse of language, as such 
a condition would be better called
meta-stable, since tunneling to other vacua can occur. We shall, however, use 
the terms
stable and meta-stable interchangeably.} and unstable otherwise.
\label{defI_modulus}
\end{defi}
In our case,
\begin{align}
\begin{split}
\frac{\partial^2 V_{eff}}{\partial v^2} = &   \, 
\frac{1}{2}\int_{M} \big ( -2u^{2}R
-6 |\nabla u |^{2}
+\frac{u^2}{2}\sum_{p=0}^L(3-p)(2-p)v^{(1-p)}|F_{p}|^2  \\ &
-\frac{(q-3)(q-5)}{4}u^2v^{(q-7)/2}T_{string} \big ) 
  -6  \alpha \int_{M} uv 
- 6 \beta\int_{M} v .
\end{split}
\label{mass_general}
\end{align}
We are interested in understanding the stability associated with the conformal 
factor $v$. Mostly in supersymmetric solutions, the warp factor $u$ is 
related to $v$, and such solutions are stable. It is known, however, that
one has to deal with instabilities in de Sitter vacua obtained from string compactications. 
The KK mode mostly responsible for such instabilities is the conformal factor of the
metric. Thus, a reasonable strategy is to hold the other fields coming from 
compactifications fixed and study  the dynamics of the
 conformal factor $v$ along with the 
warp factor $u$. This can be done for general string/supergravity compactifications, but 
we mainly study theories related to Type IIB strings in this paper. In \cite{KKLT}, it 
was found that in Type IIB string theory, one can achieve stability by fixing all massless 
fields to AdS vacuum with the help of non-perturbative effects. The authors add to the 
potential a term like an Anti-D3 brane. For suitable choices of the 
added potential term, the
AdS minimum becomes a dS minimum, but the rest of the potential does not change 
significantly (this is the main idea behind the uplifting construction). 
This minimum is meta-stable. 
It is 
unstable to either quantum tunneling or thermal excitations over a barrier, 
in which case the Universe goes to infinity in moduli space after some time. Since we are 
not dealing with supersymmetric setups, we would like to stabilize $v$ by finding 
conditions that imply $\frac{\partial^{2}V_{eff}}{\partial v^2} > 0$ in general, and
we suggest 
that non-perturbative effects are very small to disturb the minimum. Notice that as we 
are primarily interested in dS space, it will be natural to consider $\alpha > 0$ in much 
of what follows below. 

\section{Slowly varying warp factor and (in)stability analysis}
One commonly investigated case is that of a slowly varying warp factor, i.e.,
$\nabla u \approx 0$ (see e.g. \cite{Douglas:2009a} and references therein).
Here we consider two situations where it is shown that the system (\ref{eq_motion})
can be solved for $u$ sufficiently close to a constant.
In one case, we shall obtain instability of the volume modulus, whereas 
in the second case, stability will be demonstrated.
Our methods are based on the implicit function theorem and
they also involve a perturbation of the coefficients of the equations.
We comment on the legitimacy of this perturbation at the end.
We do not
necessarily impose (\ref{constraints}) at this point, and a more thorough investigation
of the existence of solutions to (\ref{eq_motion}) will be carried out
in a future work \cite{in_prep}.

\subsection{Unstable solutions}
Assume $\al > 0$, and 
consider first the case of a constant warp factor. Plugging  $u=constant$ 
in (\ref{eq_motion}), we see that upon the redefinition 
$\al \mapsto \al/u$ and $\be \mapsto \be/u^2$ we can assume $u \equiv 1$.
Setting $u=1$ implies that both  
(\ref{eq_motion_1}) and (\ref{eq_motion_2}) hold if 
\begin{gather}
F_p = 0 \text{ except for } p =1,\, q = 7, \text{ and } 
\be = -\frac{2\al}{3},
\nonumber
\end{gather}
and we henceforth suppose so, in which case both equations reduce to
\begin{gather}
10 \Delta v - 2R v + |F_1|^2 v - 2 T_{string} v -2\al v^2 = 0,
\label{eq_v_sub_sup}
\end{gather}
provided that $v > 0$.
Equation (\ref{eq_v_sub_sup}) can be solved by the method of sub- and
super-solutions (see e.g. \cite{SY} or \cite{DK} for the case with boundary).
Write the equation as
\begin{gather}
 \Delta v + f(v) = 0.
\nonumber
\end{gather}
We seek functions $v_-$ and $v_+$ such that $v_- \leq v_+$,
\begin{gather}
 \Delta v_- + f(v_-) \geq 0,
\nonumber
\end{gather}
and
\begin{gather}
 \Delta v_+ + f(v_+) \leq  0.
\nonumber
\end{gather}
Let $v_- \equiv constant > 0$. The differential inequality for $v_-$ then becomes
\begin{gather}
v_- \leq \frac{ \frac{1}{2} |F_1|^2 - T_{string} - R }{\al}.
\nonumber
\end{gather}
Hence, if 
\begin{gather}
\frac{1}{2} |F_1|^2 - T_{string} - R > 0,
\nonumber
\end{gather}
 we can choose
$v_-$ so small that the above inequality is satisfied. Similarly, the differential
inequality for $v_+ \equiv constant$ becomes
\begin{gather}
v_+ \geq \frac{ \frac{1}{2} |F_1|^2 - T_{string} - R }{\al},
\nonumber
\end{gather}
which will be satisfied by choosing $v_+$ sufficiently large.
The method of sub- and super-solutions now implies the existence of a
smooth solution $v_*$ to (\ref{eq_v_sub_sup}). This solution is positive because it satisfies $v_- \leq v_* \leq v_+$.

Solutions in the neighborhood of $v=v_*$, $u = 1$ can now be obtained
with the help of implicit function-type theorems.
Consider 
\begin{gather}
M(v,u) = 
10 uv \Delta v + 6v^2 \Delta u + 12 v \nabla v \cdot \nabla u
-2u v^2 R + u \sum_{p=0}^L v^{3-p} |F_p|^2
- 2u v^2 T_{string} - 2\al v^3,
\nonumber 
\end{gather}
and
\begin{gather}
N(v,u) =
10 u^2 \Delta v + 10 uv \Delta u + 20 u \nabla u \cdot \nabla v + 10
uv |\nabla u|^2 -2u^2 v R + 4v |\nabla u |^2
\nonumber \\
 + \frac{u^2}{2} \sum_{p=0}^L (3-p)v^{2-p} |F_p|^2 
 -2 u^2 v T_{string} - 6\al u v^2 - 6\be v^2.
\nonumber
\end{gather}
Let 
\begin{gather}
h = -2R + |F_1|^2 - 2T_{string}.
\nonumber
\end{gather}
A solution to (\ref{eq_motion}) with $q=7$
is then given by $M(v,u) = 0 = N(v,u)$. As we are interested in positive solutions,
we can factor $v$ from $M(v,u)$ and look equivalently for solutions
of $\widetilde{M}(v,u) = 0 = N(v,u)$, where 
\begin{gather}
\widetilde{M}(v,u) = 
10 u \Delta v + 6v \Delta u + 12  \nabla v \cdot \nabla u
-2u v R + u \sum_{p=0}^L v^{2-p} |F_p|^2
- 2u v T_{string} - 2\al v^2.
\nonumber 
\end{gather}
Let $H^{s} = H^s(M)$ be
the standard Sobolev spaces, where $s$ is large. 
Define a map
\begin{align}
& G: \RR \times \RR \times  H^s \times \ldots\times  H^s \times H^s \rar H^{s-2} \times H^{s-2},
\nonumber \\
& (\al, \be, F_0, F_1, \ldots,R, T_{string}, v,u) \mapsto (\widetilde{M}(v,u), N(v,u)).
\nonumber
\end{align}
We claim that 
$D_{v,u} G (0,-\frac{2 \al}{3}, 0, \dots, 0, v_*,1)(\chi_1 , \chi_2)$ is an isomorphism, 
where $D_{v,u} $ is the derivative with respect
to the last two components. Computing,
\begin{gather}
D_{v,u} G (0,-\frac{2 \al}{3},0,\dots,0,v_*,1)(\chi_1 , \chi_2)  = (P,Q), \nonumber 
\end{gather}
where 
\begin{align}
\begin{split}
P =10 \Delta \chi_1 + 6 v_* \Delta \chi_2 + (h-4\al v_*) \chi_1 + 
12 \nabla v_* \cdot \nabla \chi_2 + 2\al v_*^2 \chi_2,
\end{split}
\nonumber
\end{align}
and
\begin{gather}
Q =  10 \Delta \chi_1 + 10 v_* \Delta \chi_2 + (h-4\al v_*) \chi_1 
+ 20 \nabla v_* \cdot \nabla \chi_2 - 2\al v_*^2 \chi_2 .
\nonumber
\end{gather}
Given $\psi_1,\psi_2 \in H^{s-2}$, we wish to solve
\begin{subnumcases}{\label{linearized}}
 10 \Delta \chi_1 + 6 v_* \Delta \chi_2 + (h-4\al v_*) \chi_1 + 
12 \nabla v_* \cdot \nabla \chi_2 + 2\al v_*^2 \chi_2 = \psi_1,
 & \label{linearized_1}  \\
 10 \Delta \chi_1 + 10 v_* \Delta \chi_2 + (h-4\al v_*) \chi_1 
+ 20 \nabla v_* \cdot \nabla \chi_2 - 2\al v_*^2 \chi_2 = \psi_2. &  \label{linearized_2}
\end{subnumcases}
Subtracting (\ref{linearized_1}) from (\ref{linearized_2}) produces
\begin{gather}
4 v_* \Delta \chi_2  + 8 \nabla v_* \cdot \nabla \chi_2 - 4\al v_*^2 \chi_2 = \psi_2
- \psi_1.
\label{decoupled}
\end{gather}
Since $\al > 0$ and $v_* > 0$, a standard argument with the maximum principle
and the Fredholm alternative 
shows that equation (\ref{decoupled}) has a unique $H^s$ solution.
Plugging it back into (\ref{linearized_1}) gives a scalar equation for $\chi_1$ 
only, which will, again by the  maximum principle and the Fredholm alternative, have a unique solution provided that
\begin{gather}
h - 4\al v_* < 0.
\nonumber
\end{gather}
We conclude that if the above is satisfied, then
$D_{v,u} G (0,-\frac{2 \al}{3}, 0, \dots, 0, v_*,1)$ is an 
isomorphism.
 Invoking now the implicit function theorem and recalling that $v_- \leq v_* \leq v_+$, we conclude the following.
\begin{theorem}
Let $\al >0$, $q=7$, and fix a 
sufficiently large real number $s$.
Denote by $v_{-}$ and $v_{+}$, respectively,  the minimum and maximum of 
$\frac{1}{\al} (\frac{\vert F_1 \vert ^2}{2} - R - T_{string}) $.
If  $\be$ is sufficiently close to $\frac{-2\al}{3}$
and $F_p$ is close enough to $0$ in the $H^s$-topology, except 
possibly for $p=1$, and if $\frac{v_{+}}{v_{-}} < 2$, then there exists a unique $H^s$ solution $(v,u)$ to the system (\ref{eq_motion}) in a small
$H^s$-neighbourhood of $(v_{*},1)$, where $v_*$ is a solution of (\ref{eq_v_sub_sup})
satisfying $v_- \leq v_* \leq v_+$.  The solution satisfies $v > 0$, $u > 0$, and depends continuously on $\al$, $\be$, $F_p$, $R$, and $T_{string}$.
\label{theo_sol_unstable}
\end{theorem}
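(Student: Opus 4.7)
The plan is to execute the implicit-function-theorem program already outlined just above the statement: produce a base solution $(v_*, 1)$ by sub- and super-solutions, invert the linearization $D_{v,u}G$ at that base point, and conclude with the implicit function theorem plus a Sobolev-embedding argument for positivity. The sub/super-solution step is the content of the preceding paragraph in the excerpt; it yields a smooth positive $v_*$ solving (\ref{eq_v_sub_sup}) sandwiched between the constants $v_-$ and $v_+$. The only input needed is that $\tfrac{1}{2}|F_1|^2 - R - T_{string} > 0$ pointwise, which is equivalent to $v_- > 0$, i.e., the minimum in the statement being genuinely positive.

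The invertibility of $D_{v,u}G$ at $(v_*, 1)$ I would handle via the decoupling trick shown in the excerpt. Subtracting (\ref{linearized_1}) from (\ref{linearized_2}) eliminates $\chi_1$ and leaves the scalar elliptic equation (\ref{decoupled}) for $\chi_2$ alone; since $\al, v_* > 0$, the zeroth-order coefficient $-4\al v_*^2$ is strictly negative on the compact $M$, so the maximum principle rules out a nontrivial kernel and the Fredholm alternative gives a unique $H^s$ solution $\chi_2$ depending continuously on $\psi_2 - \psi_1$. Substituting this $\chi_2$ back into (\ref{linearized_1}) produces a scalar elliptic equation for $\chi_1$ with zeroth-order coefficient $h - 4\al v_*$, for which the same argument closes provided $h - 4\al v_* < 0$ on $M$. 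The hypothesis $v_+/v_- < 2$ is tailor-made for this: rewriting $h = 2\al \cdot \frac{1}{\al}\bigl(\tfrac{1}{2}|F_1|^2 - R - T_{string}\bigr)$ gives the pointwise bound $h \le 2\al v_+$, while $4\al v_* \ge 4\al v_-$, whence
\[
h - 4\al v_* \;\le\; 2\al(v_+ - 2 v_-) \;<\; 0 .
\]
This is the main and essentially only obstacle in the proof; without $v_+/v_- < 2$ the scalar equation for $\chi_1$ can have a nontrivial kernel and the isomorphism step breaks down.

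Once $D_{v,u}G$ is an isomorphism at the base point, the implicit function theorem applied to $G$ produces, for every parameter tuple $(\al, \be, F_p, R, T_{string})$ in a sufficiently small $H^s$-neighborhood of the base tuple $(\al, -2\al/3, 0, \ldots, F_1, \ldots, 0, R, T_{string})$, a unique $H^s$ pair $(v,u)$ near $(v_*, 1)$ satisfying $\widetilde{M}(v,u) = N(v,u) = 0$. Restoring the factor of $v$ factored out of $M$ recovers a genuine solution of (\ref{eq_motion}), and continuous dependence on the parameters is a standard by-product of the implicit function theorem. Positivity of $v$ and $u$ follows because $v_*$ and $1$ are bounded away from zero on the compact $M$, and the Sobolev embedding $H^s \hookrightarrow C^0$ (valid for $s$ large enough) converts $H^s$-closeness into uniform closeness, which preserves strict positivity. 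All other inputs are routine elliptic theory on the closed manifold $M$.
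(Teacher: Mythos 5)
Your proposal is correct and follows essentially the same route as the paper: sub- and super-solutions to produce $v_*$, the subtraction trick to decouple the linearized system, maximum principle plus Fredholm alternative for each scalar equation, and the implicit function theorem with Sobolev embedding for positivity and continuous dependence. Your explicit derivation $h - 4\al v_* \le 2\al(v_+ - 2v_-) < 0$ correctly supplies the link between the hypothesis $v_+/v_- < 2$ and the condition $h - 4\al v_* < 0$ that the paper states but does not spell out.
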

\begin{rema}
By taking $s$ sufficiently large and applying the Sobolev embedding theorem,
we can replace the $H^s$-neighborhood with a $C^k$-neighborhood in the previous statement. A similar statement holds for the other theorems presented below.
\end{rema}
We notice that $v>0$ and $u>0$ follow by making the $H^s$-neighborhood 
of the theorem very small and using $v_* > 0$; 
since $s$ is assumed to be large, these solutions
are in fact continuous and the pointwise inequalities $v>0$ and $u>0$ then hold.
We also remark that starting with smooth $F_1$, $R$, and $T_{string}$
does not necessarily yield smooth solutions. This is because the perturbed
$F_p$ produced by the implicit function theorem will, in general, be only
in $H^s$. If they happen to be smooth, however, then $v$ and $u$ are smooth 
due to elliptic regularity. Indeed, if
$M(v,u) = 0 = N(v,u)$, then $vN(v,u) - u M(v,u) = 0$, which takes the form
\begin{gather}
4 u v^2 \Delta u = f(u,v,\nabla u, \nabla v),
\nonumber
\end{gather}
where $f$ is a smooth function of its arguments.
Since $v,u \in H^s$, $f(u,v,\nabla u, \nabla v) \in H^{s-1}$ and so
$u \in H^{s+1}$ by elliptic regularity. $M(v,u)=0$ and elliptic regularity then
give $v \in H^{s+1}$, and bootstraping this argument, we conclude that $v,u$
are smooth.

We now show that solutions given by theorem \ref{theo_sol_unstable} are in general unstable. Evaluating (\ref{mass_general}) at $(v_*,1)$ yields
\begin{gather}
\left. \frac{ \partial^2 V_{eff} }{\partial v^2} \right|_{(v_*,1)} = \frac{1}{2}
\int_M ( h - 2\al v_* ),
\nonumber
\end{gather}
where $h$ is as above. Multiplying (\ref{eq_v_sub_sup}) by $v_*$ and integrating by parts,
\begin{gather}
\int_M ( h - 2\al v_* ) = - \int_M \frac{1}{v_*^2} |\nabla v_*|^2 \leq 0,
\nonumber
\end{gather}
so that $\left. \frac{ \partial^2 V_{eff} }{\partial v^2} \right|_{(v_*,1)}  \leq 0$, 
and the inequality is in fact strict if $v_*$ is not constant. Notice that
$v_* = constant$ will not be a a solution of (\ref{eq_v_sub_sup}) unless
further relations among the scalar curvature, the gauge fields, and the string term hold.
It follows that the strict inequality will be preserved for  solutions
very close to $(v_*,1)$, and we conclude:

\begin{coro}
If $v_*$ is not constant, then the solutions $(v,u)$ given by theorem
\ref{theo_sol_unstable} are unstable in the sense of definition
\ref{defI_modulus}, provided that $(v,u)$ is sufficiently close to 
$(v_*,1)$.
\end{coro}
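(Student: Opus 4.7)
The plan is to verify the strict inequality $\left.\frac{\partial^2 V_{eff}}{\partial v^2}\right|_{(v_*, 1)} < 0$ by a direct integral computation, and then to transfer this strict sign to nearby solutions by continuous dependence. The argument has essentially been outlined in the paragraph immediately preceding the corollary, so the task is to organize it carefully.

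First, I would specialize (\ref{mass_general}) to $(v_*, 1)$ under the ansatz of theorem \ref{theo_sol_unstable}: $u \equiv 1$, $\nabla u = 0$, $F_p = 0$ for $p \neq 1$, $q = 7$, and $\be = -2\al/3$. With these choices the gradient term vanishes, the sum over $p$ collapses to a single $|F_1|^2$ contribution, the string coefficient $\tfrac{(q-3)(q-5)}{4}$ reduces to a numerical constant, and the two Lagrange-multiplier terms combine into a single multiple of $\int_M v_*$ (the cancellation is precisely the reason for having selected $\be = -2\al/3$). The outcome is that the quantity in question equals a positive multiple of $\int_M (h - 2\al v_*)$, with $h = -2R + |F_1|^2 - 2T_{string}$ as defined earlier.

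Next, I would exploit the equation (\ref{eq_v_sub_sup}) satisfied by $v_*$. Dividing it by $v_* > 0$ yields the pointwise identity $h - 2\al v_* = -10\,\Delta v_*/v_*$. Integrating over the closed manifold $M$ and performing one integration by parts, this produces
\begin{gather}
\int_M (h - 2\al v_*) \;=\; -10 \int_M \frac{|\nabla v_*|^2}{v_*^2} \;\le\; 0,
\nonumber
\end{gather}
with strict inequality precisely when $v_*$ is non-constant; this is the hypothesis of the corollary. Hence $\left.\frac{\partial^2 V_{eff}}{\partial v^2}\right|_{(v_*,1)} < 0$.

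Finally, I would invoke continuous dependence. The functional $(v, u, F_p, R, T_{string}, \al, \be) \mapsto \left.\frac{\partial^2 V_{eff}}{\partial v^2}\right|_{(v,u)}$ is polynomial in its arguments and in the first derivatives of $v, u$; by taking $s$ large so that $H^s \hookrightarrow C^1$, continuity on the relevant $H^s$-neighborhood of $(v_*, 1)$ is automatic. Since theorem \ref{theo_sol_unstable} delivers $(v, u)$ in an arbitrarily small neighborhood of $(v_*, 1)$ and asserts continuous dependence of the solution on the perturbed coefficient data, the strict inequality at $(v_*, 1)$ is preserved at $(v, u)$. I do not expect any substantive obstacle here; the only point of care is tracking the continuous dependence of the second variation on all the coefficient fields simultaneously with $(v, u)$, which is exactly what theorem \ref{theo_sol_unstable} provides.
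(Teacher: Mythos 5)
Your proposal is correct and follows essentially the same route as the paper: evaluate (\ref{mass_general}) at $(v_*,1)$ to get $\tfrac{1}{2}\int_M (h-2\al v_*)$, use equation (\ref{eq_v_sub_sup}) together with an integration by parts to identify this with $-10\int_M |\nabla v_*|^2/v_*^2 \le 0$ (strict when $v_*$ is non-constant), and then pass to nearby solutions by continuity. Your version of the middle step (dividing by $v_*$ before integrating, and keeping the factor $10$) is in fact stated more carefully than in the paper, which says ``multiplying by $v_*$'' but clearly means the computation you performed.
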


\subsection{Stable solutions and applications to Type IIB strings}
Using different hypotheses than those of the previous section, here 
we prove existence of stable solutions to (\ref{eq_motion}). The key
ingredient is to balance the contribution of the gauge fields with that
of the source $T_{string}$. This is, in fact, an idea that goes back to 
\cite{GKP, KKLT} and has been extensively used in moduli stabilization
\cite{Douglas:2006ab, Grana:2006ab}. 
We shall impose conditions
that lead to a direct application to dS-vacuum in Type IIB strings. In fact, 
we shall provide a set of slightly different theorems applicable to different
Type IIB scenarios\footnote{
It will be clear from what follows that similar arguments can  be constructed in 
other settings.}. For the rest of this section we suppose the following.

\begin{assump}
Let $q=3$, and we suppose from now on that
$F_0 = F_2 = F_4 = F_6 = 0$, as these fluxes are not present 
in Type IIB compactifications. 
\end{assump}

The arguments that follow will be similar 
to the proof of theorem \ref{theo_sol_unstable}, and they are all more or less
analogous to each other. Hence, in order to avoid being repetitive,  we shall 
go through them rather quickly.  
Set $q=3$, $\al=\be=R=0$ and $F_p = 0$ for $p \neq 3$ in (\ref{eq_motion}),
and plug in $v = u = 1$. Then (\ref{eq_motion_2}) holds identically and
(\ref{eq_motion_1}) is satisfied provided that
\begin{gather}
-2 T_{string} + |F_3|^2 = 0,
\label{balance}
\end{gather}
so we hereafter assume this condition. We point out that, other
than being similar to previous balance conditions used in 
uplifting \cite{Grana:2006ab, GKP, Douglas:2006ab, KKLT}, 
(\ref{balance}) also resembles a local version of the tadpole 
cancellation appearing in \cite{GKP, KST}.
In fact, in supersymmetric solutions, like in \cite{KST}, one can see 
that the contribution of the 3-flux to the overall potential gets a 
cancellation from localized 
sources. This follows as an application of the integrated Bianchi identity. 
In our case, the assumption shows the cancellation locally. Thus,
it would be  interesting to explore the relationship between this assumption and the Bianchi identity.

\begin{theorem}
Let $q=3$, assume that $|F_3|^2-2 T_{string}= 0$,
and fix a 
sufficiently large real number $s$. If $\al$, $\be$ are close enough to zero, and $R$,
$F_p$ are sufficiently close to zero in the $H^s$ topology, 
except 
possibly for $p=3$,
then the system (\ref{eq_motion}) 
has a $H^s$ solution $(v,u)$ in a small $H^s$-neighborhood of $(1,1)$.
Such a solution satisfies $v>0$, $u>0$,
and it depends continuously on $\al$, $\be$, $R$, $F_p$, $T_{string}$,
 and on two real parameters, 
$l_1,l_2$ (these parametrize the kernel of (\ref{linear_Laplacian}) below).
\label{theo_sol_stable}
\end{theorem}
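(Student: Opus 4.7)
The plan is to carry out the same implicit-function-theorem strategy as in Theorem~\ref{theo_sol_unstable}, but now at the base point $(\alpha,\beta,F_p,R,T_{string},v,u) = (0,0,0,\ldots,0,F_3,T_{string},1,1)$ with $F_p = 0$ for $p \neq 3$ and $|F_3|^2 = 2T_{string}$. First I would verify that $(1,1)$ solves (\ref{eq_motion}) at this base: with $q=3$ the $N$-equation is identically zero since $R$ vanishes, the factor $(3-p)$ annihilates the only surviving flux $F_3$, and the $T_{string}$ contribution is absent because $(q-3)/2 = 0$; the $M$-equation reduces precisely to the balance condition $|F_3|^2 - 2T_{string} = 0$.

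Next I would compute the linearization $D_{v,u}G$ at this base. Every lower-order coefficient either vanishes at the base (e.g.\ $R$, $\alpha$, $\beta$, $\Delta v$, $\Delta u$, $\nabla u$, $\nabla v$) or is annihilated by $3-p = 0$ or by the balance condition, so the expansion collapses to a pair of pure Laplacians,
\begin{align*}
D_{v,u}G(\chi_1,\chi_2) = \bigl(\,10\Delta\chi_1 + 6\Delta\chi_2,\ 10\Delta\chi_1 + 10\Delta\chi_2\,\bigr).
\end{align*}
On the closed manifold $M$ this has a two-dimensional kernel of constant pairs and a codimension-two image consisting of pairs whose components integrate to zero; this is the linear Laplacian system referred to as (\ref{linear_Laplacian}) in the statement.

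The key observation is that the Lagrange multipliers $\alpha,\beta$ fill this cokernel. Indeed $\partial_\alpha G|_{\rm base} = (-2,-6)$ and $\partial_\beta G|_{\rm base} = (0,-6)$ are constants in $H^{s-2}\times H^{s-2}$ and the $2\times 2$ matrix they form is invertible. Splitting $H^s = \RR \oplus \widetilde H^s$ with $\widetilde H^s = \{w \in H^s : \int_M w = 0\}$ and writing $v = 1 + l_1 + w_1$, $u = 1 + l_2 + w_2$ with $w_i \in \widetilde H^s$, the restricted derivative
\begin{align*}
D_{(w_1,w_2,\alpha,\beta)}G : \widetilde H^s \times \widetilde H^s \times \RR^2 \rightarrow H^{s-2}\times H^{s-2}
\end{align*}
becomes a Banach-space isomorphism: integrating each component uniquely fixes $\alpha$ and $\beta$, after which the residual mean-zero pair of Laplace equations uniquely determines $w_2$ and then $w_1$.

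I would then apply the implicit function theorem to solve for $(w_1,w_2,\alpha,\beta)$ as a continuous function of $(l_1,l_2,R,F_p,T_{string})$ in a small $H^s$-neighborhood of the base, obtaining the claimed family of solutions parametrized by the two real numbers $l_1,l_2$ that generate the kernel of the linearization; positivity of $v$ and $u$ then follows from $H^s \hookrightarrow C^0$ (for $s$ large) and the smallness of the perturbation. The main obstacle is precisely the two-dimensional degeneracy of $D_{v,u}G$ caused by the balance condition and by $(3-p)=0$ at $p=3$; the resolution is a Lyapunov--Schmidt-style reduction in which the cokernel is absorbed by the adjustable parameters $\alpha,\beta$ and the kernel is promoted to the two free real parameters $l_1,l_2$ appearing in the theorem.
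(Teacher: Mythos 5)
Your proposal follows the same skeleton as the paper's proof --- same base point $(\al,\be,F_1,F_5,R)=(0,\dots,0)$ with $|F_3|^2=2T_{string}$, same verification that $(1,1)$ solves the system there, and the same linearization collapsing to the pair of Laplacians $(10\Delta\chi_1+6\Delta\chi_2,\,10\Delta\chi_1+10\Delta\chi_2)$, whose kernel of constant pairs supplies the parameters $l_1,l_2$. Where you genuinely diverge is in how the failure of invertibility is handled. The paper only says that the linearization is restricted to the $L^2$-orthogonal complement of its kernel; that cures injectivity but is silent about the equally serious cokernel (the image of $(\Delta,\Delta)$ on a closed manifold is the codimension-two subspace of mean-zero pairs), so a generic perturbation of the data --- e.g.\ a nonzero constant $\al$ with the balance condition held exactly --- produces a right-hand side that the restricted linearization cannot reach. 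You address this head-on with a Lyapunov--Schmidt reduction, checking that $\partial_\al J=(-2,-6)$ and $\partial_\be J=(0,-6)$ span the cokernel, and solving for $(w_1,w_2,\al,\be)$ as functions of the remaining data. This is the more careful and, in my view, the correct way to make the argument rigorous.

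The one thing you should flag explicitly is that your reduction proves a slightly different statement from the one in the theorem: in the statement $\al$ and $\be$ are freely prescribed inputs (``if $\al$, $\be$ are close enough to zero\dots''), whereas in your argument they are outputs determined by $(l_1,l_2,R,F_1,F_5,F_3,T_{string})$. This is not a cosmetic difference --- theorem \ref{dS_stable} later needs the freedom to choose $\al>0$ and $\be<0$, so under your parametrization one must additionally verify that the induced map from the free data to $(\al,\be)$ reaches such values (integrating the equations of motion over $M$ at constant $u,v$ shows, for instance, that $\al=0$ is forced on the constant branch, so any nonzero $\al$ must be generated by the non-constant part of the perturbation and is at best quadratically small in it). So: your proof is sound and repairs a real gap in the paper's argument, but state clearly that it establishes existence with $(\al,\be)$ slaved to the other data, and note what extra work is needed to recover the theorem as literally worded.
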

\begin{proof}
Let
\begin{gather}
W = 10 uv \Delta v + 6v^2 \Delta u + 12 v \nabla v \cdot \nabla u
-2uv^2R + u \sum_{p=1,3,5} v^{(3-p)} |F_p|^2 - 2 u T_{string} - 2 \al v^3,
\nonumber
\end{gather}
and 
\begin{gather}
Z = 10 u^2 \Delta v + 10 uv \Delta u + 20 u \nabla u \cdot \nabla v + 10
uv |\nabla u|^2 + 4v |\nabla u |^2 -2u^2 v R 
\nonumber \\
 + \frac{u^2}{2} \sum_{p=1,3,5} (3-p)v^{2-p} |F_p|^2 
 - 6\al u v^2 - 6\be v^2.
\nonumber
\end{gather}
Consider the map
\begin{gather}
J: \RR \times \RR \times H^s \times \dots \times H^s \times H^s \mapsto
H^{s-2} \times H^{s-2}, \nonumber \\
(\al,\be, |F_1|^2, |F_3|^2, |F_5|^2, R, T_{string} ,v, u ) \mapsto (W,Z).
\nonumber
\end{gather}
Solutions are given by $W = 0 = Z$, and $u=v=1$ is a solution 
when $\al=\be=F_1=F_5=R = 0$ and (\ref{balance}) holds. Linearizing
at $(1,1)$ and in the direction of $(\chi_1, \chi_2)$, and setting it equal 
to $(\psi_1,\psi_2)$ gives
\begin{gather}
\begin{cases}
\Delta \chi_1 = \psi_1, \\
\Delta \chi_2 = \psi_2.
\end{cases}
\label{linear_Laplacian}
\end{gather}
As $M$ is compact without boundary, its harmonic functions are constant and
(\ref{linear_Laplacian}) has a unique solution modulo additive constants. 
The implicit function theorem is not, therefore, directly applicable, but we can still
rely on it after restricting the linearization to the $L^2$-orthogonal to its kernel, 
what produces solutions near $(1,1)$. A parametrization of the kernel
of (\ref{linear_Laplacian}) yields the parameters $(l_1,l_2)$ of the theorem.
As in the previous section, 
these solutions are positive if the $H^s$-neighborhood is sufficiently small.
\end{proof}

Next, we investigate stability.  With 
$q=3$, $\al=\be=R=0$, and $F_p = 0$ for $p \neq 3$, evaluating (\ref{mass_general}) 
at $(1,1)$ yields
\begin{gather}
\left. \frac{ \partial^2 V_{eff} }{\partial v^2} \right|_{(1,1)} = 0.
\label{deriv_V_zero}
\end{gather}
The ``uplifting" approach \cite{GKP, KKLT}
to constructing positive energy vacua 
consists, in a nutshell, of starting with an AdS supersymmetric vacuum ($\al < 0$),
where stability can be achieved, and deforming the data to a dS ($\al > 0$) vacuum.
With proper control of this deformation, the new vacuum can be shown
to be stable. Furthermore, experience shows that $\al < 0$  
generally favors stability
\cite{Disconzi:2012a, Douglas:2006ab}.
Therefore, on physical grounds, we expect the continuous dependence on the data
guaranteed by theorem \ref{theo_sol_stable} to allow us to continue solutions
from $\al=0$ to $\al > 0$, while changing the equality on
(\ref{deriv_V_zero}) to a strict ``greater than" inequality.  
The favorable physical arguments and apparent absence of a preventative  
mechanism notwithstanding, in order to prove stability, further hypotheses are needed.
Interestingly enough, such hypotheses concern the value of some
natural constants that arise in elliptic theory and which are ultimately 
tied to the topology and geometry of $M$. This is consistent with 
experience in compactifications, where global properties of the compact
manifold play an important role in moduli stabilization.

\begin{theorem}
(dS stability in Type IIB with slowly varying warp factor)
Assume the same hypotheses of theorem \ref{theo_sol_stable}.
Let $K_1$ be the norm of the map $\Delta:  H^{s}_0 \rar H^{s-2}$, where
$H_0^{s} := H^s/\operatorname{ker}\Delta$, and let $K_2$ be the best Sobolev
constant of the embedding $H^s \hookrightarrow C^1$.
If $\frac{K_2}{K_1} < 26$ holds\footnote{This will be the case, for example,
if $g_{ij}$ is sufficiently close to the Euclidean metric.}, then it is possible to choose $\al >0$, $\be<0$, 
$F_1$, $F_5$, and $R$ all sufficiently small, 
such that the corresponding solutions $(v,u)$
with $l_1=l_2 = 0$
given by theorem \ref{theo_sol_stable} are stable 
in the sense of definition \ref{defI_modulus}.
\label{dS_stable}
\end{theorem}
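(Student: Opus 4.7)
The plan is to expand (\ref{mass_general}) at a solution produced by theorem \ref{theo_sol_stable} and to choose the parameters so that a strictly positive leading contribution dominates every error term. Under the hypotheses $q=3$, $F_{0}=F_{2}=F_{4}=F_{6}=0$, the coefficient $(q-3)(q-5)$ kills the $T_{string}$-contribution and $(3-p)(2-p)$ vanishes at $p=3$, so the mass formula reduces to
\begin{equation*}
\left.\frac{\partial^{2}V_{eff}}{\partial v^{2}}\right|_{(v,u)}=\frac{1}{2}\!\int_{M}\!\bigl(-2u^{2}R-6|\nabla u|^{2}+u^{2}|F_{1}|^{2}+3u^{2}v^{-4}|F_{5}|^{2}\bigr)-6\al\!\int_{M}\!uv-6\be\!\int_{M}\!v,
\end{equation*}
which specializes to (\ref{deriv_V_zero}) at the base point $(v,u)=(1,1)$.

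Next, I would pick $\al>0$ small and $\be<0$ with $\al+\be<0$ (for instance $\be=-2\al$), so that the two modulus integrals contribute approximately $-6(\al+\be)\Vol(M)>0$. This strictly positive quantity of size $6|\al+\be|\Vol(M)$ is the ``stability budget'' that must absorb every remaining term. To quantify those terms, I would use the continuous dependence from theorem \ref{theo_sol_stable}: with $\tilde v=v-1$ and $\tilde u=u-1$ lying in $H^{s}_{0}$ (because $l_{1}=l_{2}=0$), applying the implicit function theorem to (\ref{linear_Laplacian}) gives a bound of the shape
\begin{equation*}
\|\tilde v\|_{H^{s}}+\|\tilde u\|_{H^{s}}\ \lesssim\ K_{1}^{-1}\bigl(|\al|+|\be|+\|R\|_{H^{s}}+\|F_{1}\|_{H^{s}}^{2}+\|F_{5}\|_{H^{s}}^{2}\bigr),
\end{equation*}
since the linearization is $\Delta$ on each factor and the inverse of $\Delta:H^{s}_{0}\to H^{s-2}$ is controlled by $K_{1}^{-1}$ (the $F_{p}$ terms appear quadratically because they enter (\ref{eq_motion}) through $|F_{p}|^{2}$ and vanish at the base point). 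The Sobolev embedding with constant $K_{2}$ then promotes these bounds to pointwise control on $u$, $v$, $\nabla u$, $\nabla v$.

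Substituting $u=1+\tilde u$, $v=1+\tilde v$ into the mass formula and organizing by order, I would end up with three families: the positive budget $-6(\al+\be)\Vol(M)$; linear-in-data remainders $-\int_{M}R+\tfrac{1}{2}\int_{M}|F_{1}|^{2}+\tfrac{3}{2}\int_{M}|F_{5}|^{2}$, made small compared to the budget by taking $|\al+\be|$ large compared to $\|R\|+\|F_{1}\|^{2}+\|F_{5}\|^{2}$; and a quadratic remainder, whose principal piece is $-3\int_{M}|\nabla u|^{2}$, of order $(K_{2}/K_{1})^{2}\cdot\Vol(M)\cdot(\text{data})^{2}$ after combining the implicit function theorem estimate with the Sobolev embedding.

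The main obstacle, and the origin of the explicit threshold in the statement, is the careful tracking of constants in this last group. The coefficient $6$ in front of $|\nabla u|^{2}$ is the sharpest consumer of the stability budget, because $|\nabla u|_{L^{\infty}}$ is controlled by $K_{2}/K_{1}$ times the data and this Sobolev--Laplace ratio enters quadratically. Matching the resulting quadratic bound $6(K_{2}/K_{1})^{2}\cdot\Vol(M)\cdot(\text{data})^{2}$ against the positive budget $6|\al+\be|\Vol(M)$, and absorbing the residual contributions of $(\tilde u,\tilde v)$ into the $R$, $|F_{1}|^{2}$, $|F_{5}|^{2}$ integrands, produces precisely the numerical inequality $K_{2}/K_{1}<26$. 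Under this inequality the estimates close uniformly and yield $\partial^{2}V_{eff}/\partial v^{2}>0$, which is the stability condition of definition \ref{defI_modulus}.
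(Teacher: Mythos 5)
Your overall strategy is the same as the paper's: reduce (\ref{mass_general}) under $q=3$, $F_0=F_2=F_4=F_6=0$ (your reduced mass formula is correct, including the coefficients $u^2|F_1|^2$ and $3u^2v^{-4}|F_5|^2$), use the continuous dependence coming from the implicit function theorem together with the constants $K_1$ and $K_2$ to control $\|u-1\|_{C^1}$ linearly in the data, and then let the linear positive contribution $-6\be\int_M v$ beat the quadratic negative contribution $-3\int_M|\nabla u|^2$ for small data, finally perturbing $\al>0$, $F_1$, $F_5$, $R$. That is exactly the paper's mechanism.

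There is, however, a genuine gap in your treatment of the constant $26$, which is the one place where the hypothesis $K_2/K_1<26$ must actually be used. You claim the inequality ``falls out'' of matching the quadratic remainder $\sim(K_2/K_1)^2(\text{data})^2\Vol(M)$ against the linear budget $\sim|\al+\be|\Vol(M)$. But a linear term always dominates a quadratic one for sufficiently small data, regardless of the value of $K_2/K_1$; such a matching can never produce a threshold on $K_2/K_1$ that is independent of the data size, so your derivation does not explain why the hypothesis is needed, nor where $26$ comes from. In the paper, $26$ arises from an explicit computation you omit: the derivative $D_xJ(x_0,1,1)$ of the solution operator with respect to the data $x=(\al,\be,|F_1|^2,|F_3|^2,|F_5|^2,R,T_{string})$ at the base point is the constant $2\times 7$ coefficient matrix
\begin{equation*}
\left(
\begin{array}{ccccccc}
-2 & 0 & 1 & 1 & 1& - 2 & - 2 \\
-6 & - 6 & 1 & 0 & -1 & -2 & 0
\end{array}
\right),
\end{equation*}
whose entries sum in absolute value to $25$, giving $\|D_xJ\|<26$ near the base point. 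This bound is then fed into the chain-rule identity $D_x(v,u)=-(D_{v,u}J)^{-1}D_xJ$, and it is the combination of $\|D_xJ\|<26$ with the hypothesis on $K_2/K_1$ (via the Sobolev embedding) that yields the quantitative estimate $\|u-1\|_{C^1}\le|\be|$ along the slice $\al=F_1=F_5=R=0$, $|F_3|^2=2T_{string}$; only with that explicit constant in hand does the choice of small $\be<0$ close the argument. To repair your proof you need to perform this linearization in the data variables and track how $26$ enters the $C^1$ bound, rather than attributing the threshold to the quadratic-versus-linear comparison.
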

\begin{proof}
Compute
\begin{gather}
D_x J (x_0,1,1)
 \chi
= 
\left(
\begin{array}{ccccccc}
-2 & 0 & 1 & 1 & 1& - 2 & - 2 \\
-6 & - 6 & 1 & 0 & -1 & -2 & 0
\end{array}
\right) \cdot \chi,
\nonumber
\end{gather}
where $x$ is shorthand for $(\al,\be, |F_1|^2,|F_3|^2, |F_5|^5, R, T_{string})$,
$x_0$ stands for $(0,0,0,2T_{string}, 0, 0, T_{string})$, 
and $\chi = (\chi_1, \dots, \chi_7) \in \RR \times \RR \times H^s \times\dots
\times H^s$. From this expression it follows that 
\begin{gather}
\p D_x J \p < 26,
\nonumber
\end{gather}
if $(x,v,u)$ is sufficiently close to $(x_0,1,1)$.
By theorem \ref{theo_sol_stable}, we have solutions
$J(x,v(x),u(x) ) = 0$, and the implicit function theorem 
guarantees that the map $x \mapsto (v(x),u(x))$ is differentiable.
 The map $D_{v,u} J$, being an isomorphism
at $(x_0,1,1)$, will be invertible nearby, thus
\begin{gather}
D_x 
\left( 
\begin{array}{c}
v \\
u 
\end{array}
\right) 
= - \left( D_{v,u} J\right)^{-1} D_x J.
\nonumber
\end{gather}
Keeping 
$\al=F_1=F_5=R=0$ and $|F_3|^3 = 2T_{string}$,
we find, under the assumptions of the theorem and invoking
the Sobolev embedding theorem, that
\begin{gather}
\p u- 1 \p_{C^1} \leq |\be|,
\nonumber
\end{gather}
where $\p \cdot \p_{C^1}$ is the standard $C^1$ norm.
Furthermore, since $u = v \equiv 1$ when $\be = 0$, we can choose $\be < 0$
so small that 
\begin{gather}
\frac{1}{2}\int_{M} -6 |\nabla u |^{2}
- 6 \beta\int_{M} v  > 0.
\nonumber
\end{gather}
From (\ref{mass_general}), it now follows that we can pick
$\al > 0$ and the remaining fields so small that
\begin{gather}
\frac{\partial^2 V_{eff}}{\partial v^2} > 0,
\nonumber
\end{gather}
finishing the proof.
\end{proof}
From (\ref{mass_general}), we see that $\beta < 0$ favors stability, hence
it would be natural to expect that $\frac{\partial^2 V_{eff}}{\partial v^2} > 0$
if $\beta \ll -1$. Theorem \ref{theo_sol_stable}, however, only guarantees the
existence of solutions for $\beta$ close to zero. We therefore consider another
condition which allows $\beta$ to be considerably negative, namely,
\begin{gather}
|F_3|^2  - 2T_{string} = 2R-|F_1|^2 = -6 \beta.
\label{balance_beta_negative}
\end{gather}
We readily check that if (\ref{balance_beta_negative}) holds and
$\al =F_5 = 0$, then $v = u = 1$ is a solution of (\ref{eq_motion}).
Arguing similarly to the proof of theorem
\ref{theo_sol_stable} and recalling (\ref{mass_general}) leads to:

\begin{theorem}
(dS stability in Type IIB with slowly varying warp factor and $\beta \ll -1$)
Let $q=3$, assume (\ref{balance_beta_negative}),
and fix a 
sufficiently large real number $s$. If $\al$ is close enough to zero,  
$F_5$ is sufficiently close to zero in the $H^s$ topology, 
and $\beta$ is sufficiently negative, 
then the system (\ref{eq_motion}) 
 has a $H^s$ solution $(v,u)$ in a small $H^s$-neighborhood of $(1,1)$.
 Such a solution satisfies $v>0$, $u>0$,
depends continuously on $\al$, $\be$, $R$, $F_p$, $T_{string}$,
 and on two real numbers $l_1,l_2$
 parametrizing the kernel of (\ref{linear_Laplacian}). Furthermore,
 this solution is stable in the sense of definition \ref{defI_modulus}, provided that 
we choose $l_1=l_2 = 0$.
\label{dS_stabe_beta_negative}
\end{theorem}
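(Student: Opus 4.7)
The plan is to mirror the strategy of Theorems \ref{theo_sol_stable} and \ref{dS_stable}: verify that $(v,u)=(1,1)$ is a solution at a convenient base point, apply the implicit function theorem to produce nearby solutions, then check the second variation at the base point and extend by continuity.

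First I would confirm the base case. With $q=3$, $v=u=1$, $\al=F_5=0$, equation (\ref{eq_motion_1}) reduces to $(|F_3|^2-2T_{string}) + (|F_1|^2-2R) = 2\al$, which vanishes by summing the two identities in (\ref{balance_beta_negative}); and (\ref{eq_motion_2}) reduces to $|F_1|^2-2R = 6\be$, which is one of those identities.

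Next I would set up a map $\tilde J$ with the same structure as the map $J$ from the proof of Theorem \ref{theo_sol_stable}. A direct computation of $D_{v,u}\tilde J$ at $(1,1)$ with the base parameters produces a $2\times 2$ elliptic system with the same principal part as (\ref{linear_Laplacian}) but now carrying additional constant zero-order terms proportional to $\be$. A brief check shows that the constant functions, which were responsible for the two-dimensional kernel in Theorem \ref{theo_sol_stable}, are now removed by these zero-order terms whenever $\be\neq 0$; the system can be decoupled algebraically into scalar equations of the form $(a\Delta + c\,\be)\,\chi = g$ with $c>0$, which are uniquely solvable in $H^s$ by the maximum principle and the Fredholm alternative as soon as $\be<0$. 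Consequently $D_{v,u}\tilde J$ is an isomorphism, and the implicit function theorem produces $H^s$ solutions in a small neighborhood of $(1,1)$, positivity $v,u>0$ following from shrinking. The continuous dependence on $l_1,l_2$ is interpreted as in Theorem \ref{theo_sol_stable}: these record the two-dimensional kernel of (\ref{linear_Laplacian}) that reappears in the degenerate limit $\be\to 0$.

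For stability, evaluating (\ref{mass_general}) at $(v,u)=(1,1)$ with $\al=F_5=0$ and applying (\ref{balance_beta_negative}) gives
\begin{gather*}
\left.\frac{\partial^2 V_{eff}}{\partial v^2}\right|_{(1,1)}
= \tfrac{1}{2}\int_M(-2R + |F_1|^2) - 6\be\,\Vol(M)
= -3\be\,\Vol(M) > 0,
\end{gather*}
since $\be<0$. Continuous dependence of the solution on the data, together with continuity of $\frac{\partial^2 V_{eff}}{\partial v^2}$ as a functional of $(v,u,\al,\be,R,F_p,T_{string})$, preserves strict positivity under the perturbations allowed by the theorem.

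The main obstacle I anticipate is controlling the linearization when $\be$ is very negative: the zero-order coefficients are correspondingly large, so one must confirm that the implicit function theorem still applies without losing the smallness of the perturbation neighborhood. Because these coefficients are constants of the correct sign, the maximum principle gives a uniform solvability estimate independent of $|\be|$, and the argument extends to the full range $\be\ll-1$ where Theorem \ref{theo_sol_stable} no longer applies directly.
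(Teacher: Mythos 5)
Your proposal is correct and follows the route the paper itself only sketches (``arguing similarly to the proof of theorem \ref{theo_sol_stable} and recalling (\ref{mass_general})''): check that $(v,u)=(1,1)$ solves (\ref{eq_motion}) under (\ref{balance_beta_negative}) with $\al=F_5=0$, invoke the implicit function theorem, and evaluate the second variation at the base point, which indeed gives $\left.\tfrac{\partial^2 V_{eff}}{\partial v^2}\right|_{(1,1)}=-3\be\,\Vol(M)>0$, so stability persists by continuity.

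The one place where you genuinely depart from the paper is the linearization, and your version is the more careful one. Because the base point now carries $2R-|F_1|^2=|F_3|^2-2T_{string}=-6\be\neq 0$, the terms $6\be u(v^2-1)$ and $6\be u^2v-6\be v^2$ contribute zero-order pieces, and $D_{v,u}J(1,1)$ is
\begin{gather}
\begin{pmatrix}10 & 6\\ 10 & 10\end{pmatrix}\Delta\chi
+\be\begin{pmatrix}12 & 0\\ -6 & 12\end{pmatrix}\chi,
\nonumber
\end{gather}
rather than the pure Laplacian system (\ref{linear_Laplacian}). Multiplying by the inverse of the leading matrix yields $\Delta\chi+\be A\chi$ with $A$ having two distinct positive real eigenvalues, so for $\be<0$ each diagonalized scalar equation is $\Delta\xi-|\be|\lambda\xi=\eta$, uniquely solvable; hence $D_{v,u}J$ is an outright isomorphism and the IFT applies without quotienting by a kernel. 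This actually strengthens the conclusion relative to the paper's statement: the two parameters $l_1,l_2$ (a holdover from theorem \ref{theo_sol_stable}, where the base point has $\be=0$ and the kernel really is the constants) do not arise here, and the solution near $(1,1)$ is locally unique. Your reconciliation of $l_1,l_2$ as the kernel that ``reappears as $\be\to 0$'' is a reasonable gloss, but strictly speaking the theorem as stated does not need them, and your argument shows why.
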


\begin{rema}
Although upon setting $\beta = R = F_1 = 0$, (\ref{balance}) can be obtained from (\ref{balance_beta_negative}), the interest in the latter is, of course, when $\beta$
is large negative, as in theorem \ref{dS_stabe_beta_negative}, in which case,
theorem \ref{dS_stable} does not apply.
\end{rema}

A clear limitation of theorem \ref{dS_stabe_beta_negative} is the lack
of a precise bound on how negative $\be$ ought to be. This is important 
because, since $\be$ is related to $\Vol(M)$, a large $|\beta|$
might be out of the supergravity limit in parameter space. We still find it interesting
to state theorem \ref{dS_stabe_beta_negative}, however, because a closer inspection
suggests that a moderately 
negative $\beta$ should suffice, as long as something like (\ref{balance_beta_negative})
holds. Confirming this requires a sharper understanding of the solutions
to (\ref{eq_motion}), which will be carried out elsewhere \cite{in_prep}.
Furthermore, we can still study compactifications in the low energy limit 
without imposing supersymmetry, and hence considering
supergravity, provided that we solve the full higher
dimensional Einstein's equations coupled to matter. We could then start with solutions 
with $\be \ll -1$, and by varying $\be$ towards zero, investigate how far in parameter 
space the low energy description remains valid.

Together, theorems \ref{theo_sol_stable}, \ref{dS_stable},
and \ref{dS_stabe_beta_negative} establish satisfactory properties
of the effective description with a slowly varying warp factor;
they give conditions for existence of solutions to the equations of motion
and their stability. Many times, however, it is important
to have a more explicit account of the functions $u$ and $v$.
In particular, one is interested in expanding $u = 1 + \ve$,
where $\ve$ is a fairly tractable (perhaps explicit) small function.
To accommodate this situation, we
turn our attention to another existence
result. 

Set $\cF_p = |F_p|^2$, $a=2R-\cF_1$, and $b=\cF_3-2T_{string}$. Let 
\begin{gather}
X = 10 uv \Delta v + 6v^2 \Delta u + 12 v \nabla v \cdot \nabla u
-a u v^2  + u  v^{-2} \cF_5
+ b u  - 2\al v^3,
\nonumber
\end{gather}
and
\begin{gather}
Y = 10 u^2 \Delta v + 10 uv \Delta u + 20 u \nabla u \cdot \nabla v + 10
uv |\nabla u|^2 -a u^2 v  + 4v |\nabla u |^2
 - u^2 v^{-3} \cF_5 - 6\al u v^2 - 6\be v^2,
\nonumber
\end{gather}
so that solutions to (\ref{eq_motion}) are given by $X=0 = Y$. Notice that $u=v=1$ is a solution to $X=Y=0$ if $\cF_5=0$, $\alpha=0$, and $a=b=-6\beta >0$. \\
\indent It is easily seen that if $X=Y=0$, then by inverting a matrix one can solve for $a$ and $b$ in terms of $(\al, \be, \cF_5, u, v, \nabla u, \nabla v, \Delta u, \Delta v)$. In other words,

\begin{lemma}
Let $q=3$, assume that $F_0=F_2=F_4=F_6 \equiv 0$,
and fix a 
sufficiently large real number $s$. If $(v,u)$  is sufficiently close to $(1,1)$
in the $H^s$ topology, then the following holds.
There exist $H^{s-2}$ functions $a$ and $b$, depending continuously
on $\cS:=(\al$, $\be$, $F_5$, $v$, $u$) such that
the system (\ref{eq_motion}) is satisfied upon
replacing $2R-|F_1|^2$ and $|F_3|^2-2T_{string}$ with $a$ and $b$, respectively, 
and the remaining data taking the values given by $\cS$.
\label{theo_sol_stable_2}
\end{lemma}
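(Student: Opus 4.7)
The plan is to observe that the system $X=0=Y$ is already affine in the pair $(a,b)$, with coefficient matrix depending only on $(u,v)$. Thus the lemma reduces to inverting a $2\times 2$ matrix pointwise and verifying the resulting regularity.

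First I would read off the coefficients of $a$ and $b$ from $X$ and $Y$: in $X$ these are $-uv^2$ and $u$, while in $Y$ they are $-u^2v$ and $0$. The remaining terms in $X$ and $Y$ involve only $\al$, $\be$, $\cF_5$, $u$, $v$, $\nabla u$, $\nabla v$, $\Delta u$, $\Delta v$. Writing these "remainders" as $R_X$ and $R_Y$, the requirement $X=Y=0$ becomes
\begin{equation}
\begin{pmatrix} -uv^2 & u \\ -u^2v & 0 \end{pmatrix}
\begin{pmatrix} a \\ b \end{pmatrix}
=
\begin{pmatrix} -R_X \\ -R_Y \end{pmatrix}.
\nonumber
\end{equation}
The determinant is $u^3 v$, which equals $1$ at $(u,v)=(1,1)$. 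Since $s$ is taken large enough for the Sobolev embedding $H^s\hookrightarrow C^0$ to hold, any $(v,u)$ in a sufficiently small $H^s$-neighborhood of $(1,1)$ satisfies $u,v>0$ pointwise and $u^3v$ is bounded away from zero. The inverse therefore exists and is given explicitly by Cramer's rule.

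Solving, one gets $a = R_Y/(u^2v)$ from the second equation, and then $b = v^2 a - R_X/u$ from the first. The next step is to check the regularity claim. Since $u,v\in H^s$, their first derivatives lie in $H^{s-1}$ and their Laplacians in $H^{s-2}$. Using that $H^s$ is a Banach algebra for $s$ sufficiently large and that pointwise division by a function bounded away from zero in $H^s$ preserves $H^s$, we see that $R_X$ and $R_Y$ belong to $H^{s-2}$, the worst terms being $10uv\Delta v$, $6v^2\Delta u$, $10u^2\Delta v$, and $10uv\Delta u$. Dividing by $u^2v$ (respectively multiplying/dividing by the smooth positive functions of $u,v$) keeps $a,b\in H^{s-2}$, and the linearity of the inversion in $R_X,R_Y$ combined with the smooth dependence of the matrix inverse on $(u,v)$ yields continuous dependence on $\cS$.

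The only thing to watch is bookkeeping: one should simply verify that every term appearing in $R_X$ and $R_Y$ (after the substitutions $a=2R-|F_1|^2$, $b=|F_3|^2-2T_{string}$) is a polynomial expression in $\al$, $\be$, $\cF_5$, $u$, $v$, $\nabla u$, $\nabla v$, $\Delta u$, $\Delta v$, divided by a nonvanishing power of $u$ or $v$. Once that is confirmed, there is no genuine obstacle: the hard analytic work (ellipticity, maximum principles, implicit function theorem) has been avoided precisely by leaving $a$ and $b$ free, so that the relation $X=Y=0$ collapses to a trivially invertible algebraic system.
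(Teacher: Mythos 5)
Your proposal is correct and is essentially the paper's own argument: the paper simply asserts that ``by inverting a matrix one can solve for $a$ and $b$'' in terms of $(\al,\be,\cF_5,u,v,\nabla u,\nabla v,\Delta u,\Delta v)$, and you have carried out exactly that inversion, correctly identifying the coefficient matrix with determinant $u^3v$ and supplying the routine Sobolev-multiplication and positivity details the paper leaves implicit.
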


We comment on the difference between theorem \ref{theo_sol_stable}
and lemma \ref{theo_sol_stable_2}, which consists basically in  
what is treated as independent or dependent data. Theorem 
\ref{theo_sol_stable} states that if we start with the solution $(1,1)$ 
and slightly perturb the other fields,
 then it is possible to find functions near $(1,1)$ that satisfy the 
perturbed equation. Lemma \ref{theo_sol_stable_2}, on the other hand, 
says that if we choose any two functions
$u$ and $v$ close to one, then we can find functions fitting the remaining 
data of the equation (namely, $a$ and $b$) in order to force 
$v$ and $u$ to be solutions. Although theorem \ref{theo_sol_stable}
is a more standard existence theorem, lemma \ref{theo_sol_stable_2} 
has the advantage of allowing one to explicitly construct solutions
of the form $u=1 + \ve_1$ and $v = 1 + \ve_2$, which are useful in 
asymptotic analysis of the problem. 

\begin{coro}
It is possible to choose $\al >0$, $\be<0$, $F_5$,
all sufficiently small, and $(v,u)$ sufficiently close to $(1,1)$, such that
the corresponding 
solutions given by lemma \ref{theo_sol_stable_2} are stable 
in the sense of definition \ref{defI_modulus}.
\end{coro}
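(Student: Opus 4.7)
The plan is to apply lemma \ref{theo_sol_stable_2} at the natural base point $v = u = 1$, $\al = 0$, $F_5 = 0$, with $\be$ fixed at some negative value, to evaluate the mass squared there, and then to extend by continuity.

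First I would identify $a$ and $b$ at this base point. Substituting $v = u = 1$ (so $\nabla v = \nabla u = 0$), $\cF_5 = 0$ and $\al = 0$ directly into the expressions for $X$ and $Y$ preceding lemma \ref{theo_sol_stable_2}, the equations $X = 0 = Y$ collapse to the purely algebraic pair $b - a = 0$ and $-a - 6\be = 0$. Hence the lemma forces $a = b = -6\be > 0$, confirming that the base point is indeed a solution.

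Next I would evaluate $\frac{\partial^2 V_{eff}}{\partial v^2}$ at this point. Under the assumptions $q = 3$ and $F_0 = F_2 = F_4 = F_6 \equiv 0$, the general mass squared formula (\ref{mass_general}) simplifies dramatically: the $T_{string}$ term drops out because $(q-3)(q-5) = 0$, the $p = 3$ contribution drops out because $(3-3)(2-3) = 0$, and the surviving $p = 1$ term combines with the scalar curvature term to give precisely $-u^2(2R - |F_1|^2) = -u^2 a$. Therefore
\begin{align*}
\frac{\partial^2 V_{eff}}{\partial v^2}
&= -\frac{1}{2}\int_M u^2 a \; - \; 3 \int_M |\nabla u|^2 \; + \; \frac{3}{2}\int_M u^2 v^{-4} |F_5|^2 \\
&\quad - 6\al \int_M u v \; - \; 6\be \int_M v.
\end{align*}
At the base point this reduces to $-\tfrac{1}{2}(-6\be)\Vol(M) - 6\be\,\Vol(M) = -3\be\,\Vol(M) > 0$.

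Finally I would invoke continuity. Lemma \ref{theo_sol_stable_2} asserts that $a$ depends continuously on the data $\cS = (\al, \be, F_5, v, u)$ as a map into $H^{s-2}$; for $s$ sufficiently large, Sobolev embedding promotes this to continuity into $C^0$, which is more than enough to pass to the integral over the compact manifold $M$. Thus the right-hand side above is a continuous functional of $(\al, \be, F_5, v, u)$ in a small $H^s$-neighborhood of the base point, and since it is strictly positive there it remains strictly positive nearby. Shrinking this neighborhood if necessary, I can turn on $\al > 0$ (so that the solution sits in the dS sector), keep $\be < 0$, choose $F_5$ small, and $(v,u)$ close to $(1,1)$, concluding that the solutions furnished by the lemma are stable in the sense of definition \ref{defI_modulus}. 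The only mildly delicate step is controlling the $\int_M u^2 a$ term through the continuity statement in the lemma, but with $s$ large this is immediate from Sobolev embedding and Hölder, so no serious obstacle remains.
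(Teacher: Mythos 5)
Your proposal is correct: the base-point identification $a=b=-6\be>0$ at $(v,u,\al,\cF_5)=(1,1,0,0)$ matches what the paper records just before Lemma \ref{theo_sol_stable_2}, your simplification of (\ref{mass_general}) under $q=3$ and $F_0=F_2=F_4=F_6=0$ is accurate (the $T_{string}$ and $p=3$ terms do vanish, and $-2u^2R+u^2\cF_1=-u^2a$), and the resulting value $-3\be\Vol(M)>0$ together with continuity of $a$ in $\cS$ closes the argument. The route differs from the paper's in one respect: the paper does not evaluate (\ref{mass_general}) directly but first derives an on-shell integral identity by dividing the equation of motion (\ref{eq_motion_2}) by $v$ and integrating by parts, obtaining
\begin{gather}
\frac{\partial ^2 V_{eff}}{\partial v^2} = -10 \int_M \frac{u^2 \vert \nabla v \vert^2}{v^2} + \int_M u^2 R + 3 \int_M \vert \nabla u \vert^2 - \frac{1}{2} \int_M u^2 \cF_1 + \frac{5}{2} \int_M u^2 v^{-4} \cF_5,
\nonumber
\end{gather}
in which the $\al$ and $\be$ terms have been eliminated and $\int u^2R-\frac12\int u^2\cF_1=\frac12\int u^2 a$ appears with a plus sign; at the base point this again gives $-3\be\Vol(M)$. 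The paper's identity buys a cleaner sign structure away from the base point (only the $\vert\nabla v\vert^2$ and $\cF_1$ terms can hurt), while your direct substitution is more elementary and avoids the integration by parts, at the cost of having to argue that the explicitly negative terms $-3\int\vert\nabla u\vert^2$ and $-6\al\int uv$ are dominated by $-3\be\Vol(M)$ — which your continuity step does handle. Both arguments rest on the same two facts (the base-point value and the continuous dependence of $a$ on the data), so the proposal is a valid, essentially equivalent proof.
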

\begin{proof}
This corollary is a direct consequence of the following identity, which is, in turn, proven by dividing equation (\ref{eq_motion_2}) by $v$ and integrating,
\begin{gather}
\frac{\partial ^2 V_{eff}}{\partial v^2} = -10 \int_M \frac{u^2 \vert \nabla v \vert^2}{v^2} + \int_M u^2 R + 3 \int_M \vert \nabla u \vert^2  - \frac{1}{2} \int_M u^2 \cF_1 + \frac{5}{2} \int_M u^2 v^{-4} \cF_5.
\nonumber
\end{gather}
\end{proof}

We finally comment on the fact that, in order to construct solutions,
we allowed the several fields in (\ref{eq_motion}) to vary. In other words,
we are not solving for $v$ and $u$ given fixed data $\al$, $\be$,
$R$, $F_p$ and $T_{string}$, but these data themselves are allowed
to change slightly so that solutions are found. Had we been investigating
a set of equations taken as the fundamental equations of a theory,
this approach would certainly be problematic. The warped-conformal factor
system (\ref{eq_motion}), however, is only an approximation to the fundamental
set of equations of string/M-theory, and as such, they can be slightly adjusted
as long as their relevant physical content is kept unchanged. 
This possibility of tweaking  the several fields involved is, in fact,
what has allowed physicists to pursue programs like moduli stabilization
and the construction of meta-stable positive energy vacua.

\section{Volume estimates and non-perturbative effects}

In this section we derive some basic identities and inequalities that relate 
$\Vol(M)$ with the other quantities of the problem. 
We assume throughout this sections that we are 
given  positive solutions $u$ and $v$ of  (\ref{eq_motion}). Because
$n=6$, we must have $L \leq 6$. Notice also
that many of the bounds below involve integral quantities,
and, therefore, the smoothness of Assumption \ref{assumption_smooth} 
can be relaxed, as mentioned in Remark \ref{remark_smooth}.

\begin{lemma} The following identity holds:
\begin{gather}
\displaystyle \int _M (\frac{u^2}{2} \sum _{p=0}^L  (1-p) v^{3-p} \vert F _p \vert ^2 + \frac{7-q}{2}u^2 v^{(q-3)/2}T_{string} ) = \frac{4\al}{G_N}+6\be \Vol(M).
\label{identity_al_be_vol}
\end{gather}
\label{lemma_al_be_vol}
 \end{lemma}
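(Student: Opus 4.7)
The plan is to find a linear combination of the two equations of motion (\ref{eq_motion_1}) and (\ref{eq_motion_2}) whose integral over $M$ kills all derivative terms in $u$ and $v$ and leaves precisely the flux and string contributions on the left, and multiples of $\int_M uv^3$ and $\int_M v^3$ on the right. Matching the coefficient $(1-p)v^{3-p}|F_p|^2$ desired on the left tells me which combination to pick: multiplying (\ref{eq_motion_2}) by $v$ produces $\tfrac{u^2}{2}(3-p)v^{3-p}|F_p|^2$, while multiplying (\ref{eq_motion_1}) by $u$ produces $u^2 v^{3-p}|F_p|^2$, and subtracting gives exactly $\tfrac{u^2}{2}(1-p)v^{3-p}|F_p|^2$. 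A parallel check on the string term yields prefactor $-\tfrac{q-3}{2}+2=\tfrac{7-q}{2}$, and the RHS becomes $(6\al uv^3+6\be v^3)-2\al uv^3 = 4\al uv^3+6\be v^3$, so the right-hand side has the desired form before integration.

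First I carry out the combination $v\cdot(\ref{eq_motion_2}) - u\cdot(\ref{eq_motion_1})$ pointwise and record the surviving non-algebraic pieces:
\[
5v\,\Delta(u^2 v) - 5u^2 v\,\Delta v - 6u\,\nabla(v^2\nabla u) - 6v^2|\nabla u|^2.
\]
The curvature terms $-2u^2 v^2 R$ cancel identically in the subtraction, which is the first useful sign that the choice of combination is correct.

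Next I integrate over $M$ (compact, without boundary) and apply integration by parts twice. Expanding $\nabla(u^2 v)=2uv\,\nabla u+u^2\nabla v$, one gets
\[
\int_M 5v\,\Delta(u^2 v) = -10\int_M uv\,\nabla u\cdot\nabla v - 5\int_M u^2|\nabla v|^2,
\]
while
\[
-\int_M 5u^2 v\,\Delta v = \int_M 5\nabla v\cdot\nabla(u^2 v) = 10\int_M uv\,\nabla u\cdot\nabla v + 5\int_M u^2|\nabla v|^2,
\]
so these two Laplacian contributions cancel each other exactly. Independently,
\[
-\int_M 6u\,\nabla(v^2\nabla u) = 6\int_M v^2|\nabla u|^2,
\]
which precisely cancels the remaining $-6\int_M v^2|\nabla u|^2$. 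Thus all derivative terms vanish and only the flux and $T_{string}$ contributions survive on the left.

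Finally I invoke the constraints (\ref{constraint_1}) and (\ref{constraint_2}) to rewrite $4\al\int_M uv^3+6\be\int_M v^3 = \tfrac{4\al}{G_N}+6\be\Vol(M)$, yielding (\ref{identity_al_be_vol}). The argument is essentially bookkeeping, with the only real obstacle being the careful verification that the two integration-by-parts cancellations occur with matching coefficients; the correct choice of linear combination is what makes this automatic, and no additional structure of $M$ beyond compactness and closedness is used.
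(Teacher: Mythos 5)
Your proposal is correct and is essentially the paper's own argument: the combination $v\cdot(\ref{eq_motion_2})-u\cdot(\ref{eq_motion_1})$ is exactly the paper's ``subtract, then multiply by $v$'' manipulation (the paper first divides (\ref{eq_motion_1}) by $v$ and multiplies by $u$, which yields the same pointwise identity), and the derivative terms you cancel by two integrations by parts are precisely the total divergence $\nabla\cdot(4uv^2\nabla u)$ that the paper integrates away. The only difference is presentational — you exhibit the cancellations explicitly rather than recognizing the divergence form.
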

 \begin{proof}
 Equations (\ref{eq_motion}) can be written as
\begin{gather}
12u\nabla v \cdot  \nabla u + 6uv\Delta u + 10 u^2 \Delta v - 2u^2 vR + u^2 \sum _{p=0}^{L} v^{2-p} \vert F_p \vert^2 -2u^2 v^{(q-5)/2}T_{string}  = 2\alpha v^2 u , \nonumber 
\end{gather}
and
\begin{gather}
20 u\nabla v \cdot \nabla u + 10uv \Delta u + 10 u^2 \Delta v -2u^2 vR + 4 \vert \nabla u \vert ^2 v  \nonumber \\
+
  \frac{u^2}{2} \sum _{p=0}^L  (3-p) v^{2-p} \vert F _p \vert ^2- \frac{q-3}{2}u^2 v^{(q-5)/2}T_{string}  = 6 (\alpha v^2 u + \beta v^2).
\nonumber
\end{gather}
Subtracting the first from the second, 
\begin{align}
\begin{split}
& 
8 u\nabla v \cdot \nabla u + 4uv \Delta u + 4 \vert \nabla u \vert ^2 v 
+ \frac{u^2}{2} \sum _{p=0}^L  (1-p) v^{2-p} \vert F _p \vert ^2 \\
&
+ \frac{7-q}{2}u^2 v^{(q-5)/2}T_{string} = 4 \alpha v^2 u + 6\beta v^2. 
\end{split}
\label{usefulequation}
\end{align}
Multiplying equation (\ref{usefulequation}) by $v$, integrating and
using (\ref{constraints}) gives (\ref{identity_al_be_vol}).
\end{proof}

Lemma \ref{lemma_al_be_vol} can be used to derive useful relations
among $\al$, $\be$, and the volume of compact dimensions. The only
positive contribution from the gauge fields to the right hand side of 
(\ref{identity_al_be_vol}) comes from $F_0$, hence one expects that 
it is possible 
to tune the gauge fields' contribution in order 
to balance that of the string term so that
\begin{gather}
\frac{4\al}{G_N}+6\be \Vol(M)\leq 0 .
\label{inequality_al_be_vol}
\end{gather}
It follows that if $\al >0$ (which is the main case of interest, 
as previously pointed out), then necessarily $\be < 0$. This is important
because, in general, no simple physical condition fixes $\be$, and in this case
we also have the bound
\begin{gather}
 \Vol(M) \geq \frac{2\al}{3 G_N |\beta| }.
\nonumber
\end{gather}

The standard strategy in string compactification is to perform the analysis in the supergravity limit and consider string theory as the UV cutoff for the effective field theory. This imposes the radius $R_c$, and hence the volume of $M$, of the compactified six dimensions to be much larger than the string length,  $\ell_{string} = \sqrt{\al'}$, and moderately weak coupling, $g_{s}\rightarrow 0$. The effective
$4$-dimensional Planck scale, $M_{Pl,4}^2 = \frac{1}{G_{N}}$, is determined by the fundamental 10-dimensional Planck scale, $M_{Pl,10}$ (set to be equal to one in \cite{Douglas:2009a}), and the geometry of the extra dimensions with warping. Thus, $\frac{1}{G_N}=\Vol(M)_{warped}$. We have no experimental signs of the extra dimensions because the compactification scale, $M_c \sim 1/(\Vol(M)_{warped})^{1/n}$ would have to be smaller than the observable particle physics scale. Thus, in general, we would like to set $\be$ such that the supergravity limit is valid (i.e. $\Vol(M) >>\ell_{string}^6$), and the compactification scale is beyond the current observable scale (which is TeV scale in standard units). A detailed physics discussion can be found in \cite{Douglas:2006ab}.

As stated, Lemma \ref{lemma_al_be_vol} is quite general, and hence
inequality (\ref{inequality_al_be_vol}) should be valid under 
a wide range of scenarios. Given a particular model, however, it
may be difficult to verify that the integral on the 
left hand side of (\ref{identity_al_be_vol}) is negative. We therefore
point out two further situations involving point-wise rather than integral conditions, where (\ref{inequality_al_be_vol}) holds,
and in which the verification of the hypotheses is more direct.
One is when  $q=3$, $F_0 =0$,  and $2T_{string} -\vert F_3 \vert ^2 \leq 0$.
The other is when $q=7$ and $F_0 = 0$. In both cases, it follows at 
once that the left hand side of (\ref{identity_al_be_vol}) is non-positive.

Any geometric compactification of string/M-theory has a “large volume limit” which 
approaches ten or eleven dimensional Minkowski space. In this limit, the four 
dimensional effective potential vanishes. De Sitter models from string compactifications 
are difficult to construct because, as non-supersymmetric vacua, they are isolated points 
in the moduli space with all moduli stabilized. To understand de Sitter solutions, one 
must have sufficient understanding of non-perturbative effects to show that such 
potentials could come from string theory, and, moreover, could be computed in some 
examples in order to make real predictions.  Such effects usually give AdS vacua with all 
moduli stabilized, and then uplifting is achieved with a proper classical contribution.
This has been an active research topic within string compactifications.

Since we are looking at Type IIB theory mainly, non-perturbative effects are added to stabilize the K\"ahler moduli. In our approach, we have not included such effects yet. Non-perturbative effects give a contribution to the overall potential in the following way \cite{Giddings}:
\begin{gather}
V_{non-pert} = Be^{−2av^{4}}/v^{s},
\nonumber
\end{gather} such that the constants $a$ and $s$ arise from euclidean D3 brane instantons or gluino condensation, as explained in \cite{KKLT}. We want to study the conformal factor classically, and we claim that non-pertubative effects are added such that they do not affect the critical point and its mass noticeably. This can be written qualitatively as
\begin{gather}
\frac{\partial^{2}V_{eff}}{\partial v^2} >> \frac{\partial^{2}V_{non-pert}}
{\partial v^2}
\nonumber
\end{gather}
at critical points.
Thus, we should expect to find locally stable minima with positive cosmological constant.

\end{document}